\newtheorem{thm}{Theorem}[section]
\newtheorem{defn}[thm]{Definition}
\newtheorem{lem}[thm]{Lemma}
\newtheorem{remark}[thm]{Remark}
\newtheorem{ex}[thm]{Example}
\newcommand{\ds}{\displaystyle}
\begin{document}

\title{Characterization of Differentiable Copulas}


\author[Saikat Mukherjee]{S. Mukherjee$^\dagger$}
\address{Division of Science and Mathematics\\
University of Minnesota-Morris\\
Morris, MN 56267}
\email{smukherj@morris.umn.edu}
\thanks{$\dagger$ Corresponding author}

\author[Farhad Jafari]{F. Jafari}
\address{Department of Mathematics\\
University of Wyoming\\
Laramie, WY 82071}
\email{fjafari@uwyo.edu}

\author[Jong-Min Kim]{J. Kim}
\address{Division of Science and Mathematics\\
University of Minnesota-Morris\\
Morris, MN 56267}
\email{jongmink@morris.umn.edu}

\subjclass[2010]{62H20}
\keywords{Copula, Fourier copula, Asymmetric copula, Approximation}

\begin{abstract}
This paper proposes a new class of copulas which characterize the set of all twice continuously differentiable copulas. We show that our proposed new class of copulas is a new generalized copula family that include not only asymmetric copulas but also all smooth copula families available in the current literature.  Spearman's rho and Kendall's tau for our new Fourier copulas which are asymmetric are introduced. Furthermore, an approximation method is discussed in order to optimize Spearman's rho and the corresponding Kendall's tau.
\end{abstract}

\maketitle


\section{Introduction}

Recently, a study of dependence by using copulas has been getting more attention in the areas of finance, actuarial science, biomedical studies and engineering because a copula function does not require a normal distribution and independent, identical distribution assumptions. Furthermore, the invariance property of copula has been attractive in the finance area. But most copulas including Archimedean copula family are symmetric functions so that the copula model fitting for asymmetric data is not appropriate. Liebscher (\cite{Li08}) introduced two methods for the construction of asymmetric multivariate copulas. The first is connected with products of copulas. The second approach generalizes the Archimedean copulas. The resulting copulas are asymmetric but are little extension of the parametric families of copulas. This paper proposes a new generalized copula family which include asymmetric copulas in addition to all copula families available in the current literature.

We will characterize the set of all twice continuously differentiable copulas that may arise. We will start with some basic concepts of copulas in the next section. In Section 3 we characterize a class of differentiable copulas and state our main theorem. In Section 4 we will discuss some well-known copulas and introduce a new class of copulas (asymmetric in general) in support of our main result. Finally, in Section 5, we will study the dependence structure by calculating Spearman's rho and Kendall's tau and describe a method to maximize and minimize Spearman's rho using an approximation technique for a special class of copulas that arises from our construction.

\section{Definitions and Preliminaries}
In this section we recall some concepts and results that are necessary to understand a (bivariate) copula. Throughout this paper $\mathbb{I}$ denotes the unit interval $[0, 1]$. A copula is a multivariate distribution function defined on $\mathbb{I}^n$, with uniformly distributed marginals. In this paper, we focus on bivariate (two-dimensional, $n=2$) copulas.

\begin{defn}A bivariate copula is a function $C: \mathbb{I}^2\rightarrow \mathbb{I}$, which satisfies the following properties
\begin{enumerate}
\item[(P1)] $C(0,v)\;=\;C(u,0)\;=\;0,\qquad \forall u,v\in \mathbb{I}$
\item[(P2)] $C(1,u)\;=\;C(u,1)\;=\;u,\qquad \forall u\in \mathbb{I}$
\item[(P3)] $C(u_2,v_2)\;+\;C(u_1,v_1)\;-\;C(u_1,v_2)\;-\;C(u_2,v_1)\;\geq 0,\qquad \forall u_1, u_2, v_1, v_2\in \mathbb{I}$ with $u_1\leq u_2, v_1\leq v_2$.
\end{enumerate}
\end{defn}

The importance of copulas has been growing because of their applications in several fields of research. Their relevance primarily comes from the following theorem of Sklar (see \cite{Sk59}):
\begin{thm}\label{Sklar}
Let $X$ and $Y$ be two random variables with joint distribution function $H$ and marginal distribution functions $F$ and $G$, respectively. Then there exists a copula $C$ such that
\begin{center}
H(x,y) = C(F(x), G(y))
\end{center}
for all $x, y \in \mathbb{R}$. If $F$ and $G$ are continuous, then $C$ is unique. Otherwise, the copula $C$ is uniquely determined on $Ran(F)\times Ran(G)$. Conversely, if $C$ is a copula and $F$ and $G$ are distribution functions, then the function $H$ defined above is a joint distribution function with margins $F$ and $G$.
\end{thm}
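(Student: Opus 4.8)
The plan is to treat the two implications separately, beginning with the easier converse. Given a copula $C$ and one-dimensional distribution functions $F$ and $G$, I would set $H(x,y) = C(F(x), G(y))$ and check directly that $H$ is a genuine bivariate distribution function. It is grounded and satisfies $H(+\infty,+\infty) = 1$ because $F$ and $G$ do and $C$ obeys (P1)--(P2); its rectangle inequality is (P3) applied at the points $F(x_i), G(y_j)$, using that $F,G$ nondecreasing turns $x_1 \le x_2$, $y_1 \le y_2$ into $F(x_1) \le F(x_2)$, $G(y_1) \le G(y_2)$; its margins are read off from (P2), since $H(x,+\infty) = C(F(x),1) = F(x)$ and likewise in the other variable; and right-continuity passes through because every copula is (Lipschitz) continuous while $F,G$ are right-continuous.

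For the forward direction the central tool is a Lipschitz-type estimate: for $a_1 \le a_2$ and $b_1 \le b_2$ in $\mathbb{R}\cup\{\pm\infty\}$,
$$|H(a_2,b_2) - H(a_1,b_1)| \le \big(F(a_2) - F(a_1)\big) + \big(G(b_2) - G(b_1)\big),$$
which I would obtain from the 2-increasing property of $H$ together with its marginal behavior. An immediate consequence is that $H(a,b)$ depends on $(a,b)$ only through the pair $(F(a),G(b))$, so the prescription $C_0(F(a),G(b)) := H(a,b)$ unambiguously defines a function $C_0$ on $Ran(F) \times Ran(G)$, and the estimate shows $C_0$ is uniformly continuous there. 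Extending $C_0$ by uniform continuity to the closure $\overline{Ran(F)} \times \overline{Ran(G)}$ and then filling in the complement of that set in $\mathbb{I}^2$ by bilinear interpolation across the missing intervals yields a candidate $C:\mathbb{I}^2 \to \mathbb{I}$ with $C(F(x),G(y)) = H(x,y)$ by construction.

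The main obstacle is verifying that this extension $C$ is actually a copula, i.e. that (P1)--(P3) survive both the continuous extension and the interpolation. Groundedness (P1) and the uniform margins (P2) are straightforward to propagate through the construction; the delicate point is (P3) for rectangles straddling the interpolated gaps, where one argues that bilinear interpolation of a 2-increasing function prescribed on the boundary of a rectangle is again 2-increasing and that adjacent pieces agree along their common edges. For uniqueness, the identity $C(F(x),G(y)) = H(x,y)$ already forces the values of any representing copula on $Ran(F) \times Ran(G)$; when $F$ and $G$ are continuous this set has closure all of $\mathbb{I}^2$, so continuity of copulas determines $C$ everywhere, whereas in general one only gets uniqueness on $Ran(F) \times Ran(G)$, exactly as stated. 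As an alternative in the continuous case one can skip the extension: put $U = F(X)$, $V = G(Y)$, note these are uniform on $\mathbb{I}$ by the probability integral transform, and let $C$ be their joint distribution function; the general case then follows from the distributional transform, replacing $F(X)$ by $F(X^{-}) + W\,(F(X) - F(X^{-}))$ for an auxiliary independent uniform $W$.
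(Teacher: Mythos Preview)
The paper does not supply its own proof of this theorem; it simply states the result and directs the reader to Schweizer and Sklar \cite{ScSk83} for a proof. Your proposal is essentially the classical argument presented there (and reproduced in Nelsen \cite{Ne99}): the converse by direct verification, and the forward direction via the Lipschitz bound $|H(a_2,b_2)-H(a_1,b_1)| \le (F(a_2)-F(a_1)) + (G(b_2)-G(b_1))$, well-definedness of $C_0$ on $Ran(F)\times Ran(G)$, extension to $\mathbb{I}^2$ by continuity and bilinear interpolation, and the uniqueness discussion. The alternative you mention at the end, via the distributional transform, is a genuinely different and more probabilistic route (due to R\"uschendorf) that bypasses the interpolation step entirely; it is shorter but less elementary, whereas the interpolation approach stays purely within real analysis and makes the structure of the non-uniqueness on $\mathbb{I}^2 \setminus (Ran(F)\times Ran(G))$ more transparent.
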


Sklar's theorem clarifies the role that copulas play in the relationship between multivariate distribution functions and their univariate margins. A proof of this theorem can be found in \cite{ScSk83}.

\begin{defn}
Fr\'{e}chet lower and upper bounds for copulas are denoted by $C_L$ and $C_U$, respectively, and defined by
$$C_L(u, v) = \max\{u+v-1, 0\},$$
$$C_U(u, v) = \min\{u, v\},$$
for all $(u, v)\in \mathbb{I}^2.$
\end{defn}
Then it well-known that for any copula $C(u, v)$,
$$C_L(u, v) \leq C(u, v) \leq C_U(u,v),\qquad \forall (u, v)\in \mathbb{I}^2.$$

In this paper we will mainly concentrate on copulas which are twice continuously differentiable, i.e., $C(\cdot, \cdot) \in \mathcal{C}^2(\mathbb{I}^2)$. With this assumption and using property 3 (P3) of copulas, for all $u_1, u_2, v_1, v_2 \in \mathbb{I}$ with $u_1\leq u_2, v_1\leq v_2$, we have $C(u_2, v_2) - C(u_1, v_2) \geq C(u_2, v_1) - C(u_1, v_1)$. This implies $C(u_2, \cdot)-C(u_1, \cdot)$ is monotonically increasing in the second variable. Hence $\frac{\partial}{\partial v} \left[C(u_2, v)- C(u_1, v)\right] \geq 0$. Therefore $\frac{\partial}{\partial v} C(\cdot, v)$ is increasing in its first variable. Hence we deduce the following lemma:

\begin{lem}\label{copula_pde}
The following two statements are equivalent:
\begin{enumerate}
\item[i.] $C$ is a twice continuously differentiable copula.
\item[ii.] $C$  satisfies the following Dirichlet inequality problem:
\begin{equation}\label{co_pd_in}
\frac{\partial^2}{\partial u\partial v}C(u, v) \geq 0
\end{equation}
with boundary conditions:
$$C(u, 0) \;=\; C(0,v) \;=\; 0,$$
$$C(u, 1) \;=\; C(1,u) \;=\; u.$$
\end{enumerate}
\end{lem}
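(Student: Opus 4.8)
The plan is to prove the two implications (i)$\Rightarrow$(ii) and (ii)$\Rightarrow$(i) separately, since the boundary conditions in (ii) are literally (P1) and (P2), the only real content is the equivalence of (P3) with the second-derivative inequality \eqref{co_pd_in} under the standing $\mathcal{C}^2$ hypothesis.

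For (i)$\Rightarrow$(ii): Assume $C$ is a twice continuously differentiable copula. The boundary conditions are immediate from (P1) and (P2). For the inequality, I would unwind the rectangle inequality (P3) in the form already displayed in the paragraph preceding the lemma: for $u_1\le u_2$ and $v_1\le v_2$, (P3) gives
$$\bigl[C(u_2,v_2)-C(u_2,v_1)\bigr]-\bigl[C(u_1,v_2)-C(u_1,v_1)\bigr]\ge 0.$$
Writing each bracket as $\int_{v_1}^{v_2}\partial_v C(u,v)\,dv$ (valid since $C\in\mathcal C^2$), subtracting, and then expressing the difference in $u$ as $\int_{u_1}^{u_2}\partial_u\partial_v C(u,v)\,du$, one gets $\int_{u_1}^{u_2}\!\int_{v_1}^{v_2}\partial_u\partial_v C\,dv\,du\ge 0$ for every subrectangle. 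Since $\partial_u\partial_v C$ is continuous, a standard argument (if it were negative at an interior point it would be negative on a small rectangle, contradicting the integral inequality; boundary points follow by continuity) forces $\partial_u\partial_v C\ge 0$ everywhere on $\mathbb I^2$.

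For (ii)$\Rightarrow$(i): Assume $C\in\mathcal C^2(\mathbb I^2)$ satisfies \eqref{co_pd_in} and the stated boundary conditions. Properties (P1) and (P2) are exactly the boundary conditions. For (P3), run the computation in reverse: for any $u_1\le u_2$, $v_1\le v_2$,
$$C(u_2,v_2)+C(u_1,v_1)-C(u_1,v_2)-C(u_2,v_1)=\int_{u_1}^{u_2}\!\int_{v_1}^{v_2}\frac{\partial^2}{\partial u\,\partial v}C(u,v)\,dv\,du\ge 0,$$
the final inequality being \eqref{co_pd_in}. Hence $C$ satisfies (P1)--(P3) and is a copula; it is twice continuously differentiable by hypothesis. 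I should also note that $C$ maps into $\mathbb I$: monotonicity in each variable (which follows from $\partial_u C\ge 0$ and $\partial_v C\ge 0$, themselves consequences of the rectangle inequality together with the vanishing boundary values) combined with (P1) and (P2) pins $C$ between $0$ and the marginal values, so $0\le C\le 1$.

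The only mildly delicate step is the pointwise conclusion $\partial_u\partial_v C\ge 0$ from the family of integral inequalities in the first implication; everything else is bookkeeping with the fundamental theorem of calculus. I would handle that step by the contradiction-with-continuity argument sketched above, and treat the range condition $C(\mathbb I^2)\subseteq\mathbb I$ as a short remark rather than a separate lemma.
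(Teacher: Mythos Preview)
Your proposal is correct and follows essentially the same idea as the paper: the paragraph immediately preceding the lemma derives $\partial^2 C/\partial u\,\partial v\ge 0$ from (P3) by the monotonicity--differentiation chain you allude to, while you package the same content via the double-integral identity $\int_{u_1}^{u_2}\!\int_{v_1}^{v_2}\partial_u\partial_v C = C\text{-volume of the rectangle}$ together with continuity. Your write-up is in fact more complete than the paper's, since the paper only argues (i)$\Rightarrow$(ii) explicitly and leaves (ii)$\Rightarrow$(i) and the range condition $C(\mathbb I^2)\subseteq\mathbb I$ implicit.
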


\section{Characterization of $\mathcal{C}^2$ copulas}

In this section we will start by solving the above problem stated in Lemma \ref{copula_pde} and then we will characterize all twice differentiable copulas.

Suppose $\gamma: \mathbb{I}^2\rightarrow \mathbb{R}$ is a continuous real-valued function. Then inequality (\ref{co_pd_in}) can be reformulated as follows,
$$\frac{\partial^2}{\partial u\partial v}C(u, v) = \gamma^2(u, v).$$
Integrating twice we get
$$C(u, v) = \ds\int_0^v\int_0^u \gamma^2(s, t)\; ds \;dt + H(u) + G(v),$$
where $H$ and $G$ are two arbitrarily real-valued functions of $u$ and $v$, respectively. Using boundary conditions $C(u, 0) = C(0, v) = 0$, we have $H(u) = -G(v) =\; $constant. Hence $C$ has the following form
\begin{equation}\label{copula1}
C(u, v) = \ds\int_0^v\int_0^u \gamma^2(s, t)\; ds \;dt.
\end{equation}
Now using the boundary condition $C(1, v) = v$, we have
$$\ds\int_0^v\int_0^1 \gamma^2(s, t)\; ds \;dt \;=\; v.$$
Differentiating both sides with respect to $v$, we have
\begin{equation}\label{copula2}
\int_0^1 \gamma^2(u, v)\; du \;=\; 1, \;\;\;\;\;\;\; \forall v\in \mathbb{I}.
\end{equation}
Similarly using the fourth boundary condition $C(u, 1) = u$, we have
\begin{equation}\label{copula3}
\int_0^1 \gamma^2(u, v)\; dv \;=\; 1, \;\;\;\;\;\;\; \forall u\in \mathbb{I}.
\end{equation}
This leads to the following theorem,

\begin{thm}\label{MJK}
Suppose $h: \mathbb{I}^2\rightarrow [-1, \infty)$ is a continuous real-valued function such that
\begin{eqnarray}
\int_0^1 h(u, v)\; dv \;=\; 0\;\;\;\;\;\; \forall u\in \mathbb{I},\label{copula_bd1}\\
\int_0^1 h(u, v)\; du \;=\; 0\;\;\;\;\;\; \forall v\in \mathbb{I},\label{copula_bd2}
\end{eqnarray}
then
\begin{equation}\label{copula}
C(u, v) = \ds\int_0^v\int_0^u 1 + h(s, t)\; ds \;dt.
\end{equation}
is a copula. Furthermore, every twice continuously differentiable copula is of the form given in (\ref{copula}).
\end{thm}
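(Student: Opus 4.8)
The plan is to prove the two assertions separately: first that the stated integral formula produces a copula whenever $h$ satisfies the hypotheses, and then that every twice continuously differentiable copula can be written in this form.

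For the first direction, I would verify the three defining properties (P1)--(P3) together with the range requirement $C(\mathbb{I}^2)\subseteq\mathbb{I}$. Property (P1) is immediate, since in $C(u,0)$ and $C(0,v)$ one of the two integrals is over a degenerate interval. For (P2), I would apply Fubini's theorem to interchange the order of integration in
$$C(u,1)=\int_0^1\int_0^u \big(1+h(s,t)\big)\,ds\,dt=\int_0^u\Big(1+\int_0^1 h(s,t)\,dt\Big)ds,$$
which equals $\int_0^u 1\,ds=u$ by hypothesis (\ref{copula_bd1}); the identity $C(1,v)=v$ follows in the same way from (\ref{copula_bd2}). For (P3) the key observation is that the rectangle sum appearing in (P3) is exactly
$$C(u_2,v_2)+C(u_1,v_1)-C(u_1,v_2)-C(u_2,v_1)=\int_{v_1}^{v_2}\int_{u_1}^{u_2}\big(1+h(s,t)\big)\,ds\,dt,$$
which is nonnegative because $1+h\ge 0$; taking $(u_1,v_1)=(0,0)$ and comparing with $C(1,1)=1$ also yields $0\le C\le 1$.

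For the converse I would reverse the computation carried out just before the statement. Given a twice continuously differentiable copula $C$, put $h:=\frac{\partial^2}{\partial u\,\partial v}C-1$. This $h$ is continuous, and by Lemma \ref{copula_pde} it satisfies $h\ge -1$, so $h$ maps $\mathbb{I}^2$ into $[-1,\infty)$. Integrating the mixed partial derivative and using the boundary values $C(u,0)=C(0,v)=0$ gives
$$C(u,v)=C(u,v)-C(u,0)-C(0,v)+C(0,0)=\int_0^v\int_0^u \frac{\partial^2}{\partial s\,\partial t}C(s,t)\,ds\,dt=\int_0^v\int_0^u \big(1+h(s,t)\big)\,ds\,dt,$$
which is the asserted representation. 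It then remains to check (\ref{copula_bd1}) and (\ref{copula_bd2}) for this $h$: substituting $C(u,1)=u$ into the representation, interchanging the order of integration and differentiating the resulting identity $\int_0^u\int_0^1 h(s,t)\,dt\,ds=0$ with respect to $u$ yields $\int_0^1 h(u,t)\,dt=0$, and (\ref{copula_bd2}) follows symmetrically from $C(1,v)=v$.

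The argument is essentially routine; the only points that deserve some care are the uses of Fubini's theorem (legitimate since $1+h$ is continuous, hence bounded, on the compact square $\mathbb{I}^2$) and the differentiation-under-the-integral step, where continuity of $h$ ensures that $u\mapsto\int_0^1 h(u,t)\,dt$ is continuous so that the fundamental theorem of calculus applies. I do not expect a genuine obstacle here; if one wished to drop the smoothness assumption in the forward direction, one would simply verify (P3) directly from the rectangle-integral identity above, which goes through unchanged.
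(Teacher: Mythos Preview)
Your proposal is correct and follows essentially the same route as the paper. The only cosmetic difference is that in the forward direction the paper verifies $\frac{\partial^2 C}{\partial u\,\partial v}=1+h\ge 0$ together with the boundary conditions and then invokes Lemma~\ref{copula_pde}, whereas you verify (P3) directly via the rectangle-integral identity; the converse is handled identically in both, by setting $h=\frac{\partial^2 C}{\partial u\,\partial v}-1$ and reading off the integral conditions from the boundary values.
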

\begin{proof}
If we substitute $1 + h(u, v)$ for $\gamma^2(u, v)$ in Equations (\ref{copula1}), (\ref{copula2}) and (\ref{copula3}), it is easy to verify that every twice continuously differentiable copula is given by (\ref{copula}).

To prove the other direction, we have from (\ref{copula}), $\ds\frac{\partial^2C}{\partial u\partial v}\;=\; 1 + h(u, v)$, which is continuous and non-negative on $\mathbb{I}^2$. It is easy to check $C(u, 0) = C(0, v) = 0$ for all $(u, v)\in \mathbb{I}^2$. We also have,
\begin{eqnarray}
C(1, v) &=& \ds\int_0^v\int_0^1 1 + h(s, t)\; ds \;dt\nonumber\\
 &=& \ds\int_0^v 1 \; dt\; \nonumber\\
  &=& v. \nonumber
\end{eqnarray}
Similarly, we can show that $C(u, 1) = u$. Hence by Lemma \ref{copula_pde} $C$ is a copula.
\end{proof}

\begin{remark}
One importance of Theorem \ref{MJK} is the fact that in general there is no assumption of symmetry on $h(u, v)$ and hence on $C(u, v)$. With the help of above theorem, in the next section we will construct a class of examples of non-symmetric copulas.
\end{remark}

\section{Examples}

It is quite easy to verify Theorem \ref{MJK} for well-known $\mathcal{C}^2$ copulas by constructing the corresponding $h$. In this section we will begin with showing those renowned examples and later we will show how to construct other copulas.

\subsection{Archimedean Copulas}
Archimedean copula is a very interesting class of copulas, whose investigation arose in the context of associative functions and probabilistic metric spaces (see \cite{ScSk83}) and today has also many applications in the statistical context (see \cite{Ne99}). Moreover, Archimedean copulas are widely used in applications, especially in finance, insurance and actuarial science, due to their simple form and nice properties.

\begin{defn}
Let $\varphi: \mathbb{I}\rightarrow [0, \infty]$ be a continuous, decreasing function such that $\varphi(1) = 0$. The \textit{pseudo-inverse} of $\varphi$ is the function denoted $\varphi^{[-1]}$ with domain $[0, \infty]$, range $\mathbb{I}$ and defined by
\begin{equation}
\varphi^{[-1]}(t) = \ds \left\{\begin{array}{lcr}
 \varphi^{-1}(t) &\text{if}\quad 0\leq t \leq \varphi(0),\\
 0 & \text{if}\quad \varphi(0)\leq t \leq \infty\\
\end{array}\right.\nonumber\end{equation}
\end{defn}
\begin{defn}
Let $\varphi: \mathbb{I}\rightarrow [0, \infty]$ be a continuous, convex, strictly decreasing function such that $\varphi(1) = 0$. Then a copula $C$ is called \textit{Archimedean} if it can be written as
$$C(u, v) = \varphi^{[-1]}\left(\varphi(u) + \varphi(v)\right), \;\;\;\;\forall (u, v)\in \mathbb{I}^2.$$
$\varphi$ is called an additive generator of $C$.
\end{defn}

Now assuming $C\in \mathcal{C}^2(\mathbb{I}^2)$, let us define $$h(u, v) =  {\varphi^{[-1]}}^{''}\left(\varphi(u) + \varphi(v)\right) \varphi'(u) \varphi'(v) - 1.$$ Then it is evident that $h$ is continuous. Also one can easily show that $C(u, v) = \ds\int_0^u\int_0^v 1 + h(s, t)\;dt\;ds$ and $h$ satisfies both (\ref{copula_bd1}) and (\ref{copula_bd2}).

\begin{ex}[Frank Copulas]
The Frank copula is an Archimedean copula given by
$$C(u, v) = -\ds \frac{1}{\theta} \ln{\left\{1+\frac{(e^{-\theta u} -1)(e^{-\theta v} -1)}{e^{-\theta} -1}\right\}},$$
where $\theta \in \mathbb{R}\setminus \{0\}$ is a parameter. Its generator is given by
$$\varphi_{\theta}(x) = \ds -\ln{\left\{\frac{e^{-\theta x} -1}{e^{-\theta} -1}\right\}}.$$
Then $$\ds {\varphi_{\theta}^{[-1]}}^{''}(x) = \frac{1}{\theta}\frac{e^{\theta + x} (e^\theta-1)}{(1 - e^\theta + e^{\theta + x})^2}, \qquad \varphi_\theta '(x)=\frac{\theta e^{-\theta x}}{e^{-\theta x}-1}.$$
Therefore we have
$$h(u, v) = \ds \frac{\theta e^{\theta (1 + u + v)} (e^\theta -1) }{(e^\theta-e^{\theta(1+u)}-e^{\theta(1+v)}+e^{\theta(u+v)})^2}.$$
\end{ex}
One can easily verify that $h$ satisfies Theorem \ref{MJK}.

\subsection{FGM Copulas}
The Farlie-Gumbel-Morgenstern (FGM) copula takes the form
$$C(u, v) = uv(1 + \theta (1-u)(1-v)),$$
where $\theta \in [-1, 1]$ is a parameter. The FGM copula was first proposed by Morgenstern (1956). The FGM copula is a perturbation of the product copula; if the dependence parameter $\theta$ equals zero, then the FGM copula collapses to independence. This is attractive primarily because of its simple analytical form. FGM distributions have been widely used in modeling, for tests of association, and in studying the efficiency of nonparametric procedures. However, it is restrictive because this copula is only useful when dependence between the two marginals is modest in magnitude.

Let $h(u, v) = \theta (1-2u)(1-2v)$. It can be easily verified that $h$ agrees with Theorem \ref{MJK}, and it generates the FGM copulas.

\subsection{Fourier Copulas}
The following lemma will introduce a new class of $\mathcal{C}^2$ copulas.
\begin{lem}\label{MJK_lem1}
Suppose $(a_n), (b_n), (c_n), (d_n) \in \ell^1$, the space of sequences whose series is absolutely convergent, are sequences of real numbers. Also suppose $h$ is a real-valued function on $\mathbb{I}^2$ defined by
$$h(u, v) = \ds\sum_{n=1}^\infty \left(a_n\cos{(2\pi nu)} + b_n\sin{(2\pi nu)}\right) \sum_{m=1}^\infty \left(c_m\cos{(2\pi mv)} + d_m\sin{(2\pi mv)}\right),$$
then
$$\ds\int_0^1h(u,v)\; du= 0,\;\;\;\;\;\; \forall v\in{\mathbb{I}},$$
$$\ds\int_0^1h(u,v)\; dv= 0,\;\;\;\;\;\; \forall u\in{\mathbb{I}}.$$
Furthermore, if $\ds\sum_{n, m=1} ^\infty \sqrt{a_n^2 + b_n^2} \sqrt{c_m^2+d_m^2} \leq 1$, then $h(u, v)\geq -1$ for all $(u, v) \in \mathbb{I}^2$.
\end{lem}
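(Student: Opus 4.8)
The plan is to verify the two integral conditions first, then handle the lower bound estimate. For the integral identities, I would use the fact that $\int_0^1 \cos(2\pi n u)\,du = 0$ and $\int_0^1 \sin(2\pi n u)\,du = 0$ for every integer $n \geq 1$. Since $h(u,v)$ factors as a product of a function of $u$ alone and a function of $v$ alone, say $h(u,v) = f(u)g(v)$ with $f(u) = \sum_{n=1}^\infty (a_n\cos(2\pi n u) + b_n\sin(2\pi n u))$ and $g$ defined analogously, we get $\int_0^1 h(u,v)\,du = \left(\int_0^1 f(u)\,du\right) g(v)$. The only subtlety is justifying the interchange of the sum and the integral: since $(a_n), (b_n) \in \ell^1$, the series defining $f$ converges uniformly on $\mathbb{I}$ by the Weierstrass $M$-test (each term is bounded by $|a_n| + |b_n|$), so term-by-term integration is valid and each integral vanishes. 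Hence $\int_0^1 f(u)\,du = 0$, giving the first identity; the second follows symmetrically by integrating in $v$.

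For the lower bound, the key is a pointwise estimate on each factor. For fixed $u$, the $n$-th term $a_n\cos(2\pi n u) + b_n\sin(2\pi n u)$ can be written in amplitude-phase form as $\sqrt{a_n^2 + b_n^2}\,\cos(2\pi n u - \phi_n)$ for an appropriate phase $\phi_n$, so its absolute value is at most $\sqrt{a_n^2 + b_n^2}$. Therefore $|f(u)| \leq \sum_{n=1}^\infty \sqrt{a_n^2 + b_n^2}$ and likewise $|g(v)| \leq \sum_{m=1}^\infty \sqrt{c_m^2 + d_m^2}$ for all $u, v \in \mathbb{I}$. Multiplying these and using the hypothesis $\sum_{n,m=1}^\infty \sqrt{a_n^2+b_n^2}\sqrt{c_m^2+d_m^2} = \left(\sum_n \sqrt{a_n^2+b_n^2}\right)\left(\sum_m \sqrt{c_m^2+d_m^2}\right) \leq 1$, we obtain $|h(u,v)| = |f(u)||g(v)| \leq 1$, which in particular gives $h(u,v) \geq -1$ everywhere on $\mathbb{I}^2$.

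I do not anticipate a genuine obstacle here; the argument is essentially two applications of elementary Fourier facts plus a triangle-inequality/amplitude bound. The one point requiring a little care is the convergence bookkeeping: one should note that $\ell^1$ membership of $(a_n)$ and $(b_n)$ already implies $\sum_n \sqrt{a_n^2+b_n^2} < \infty$ (since $\sqrt{a_n^2+b_n^2} \leq |a_n| + |b_n|$), so $f$ is a well-defined continuous function on $\mathbb{I}$, and similarly for $g$; this makes $h$ continuous and ensures all the interchanges of limit, sum, and integral are legitimate. If one wanted to be fully careful, one could also remark that the double series $\sum_{n,m}\sqrt{a_n^2+b_n^2}\sqrt{c_m^2+d_m^2}$ factors as a product of two convergent single series precisely because all terms are nonnegative (Tonelli/Fubini for series), which is what lets us rewrite the hypothesis in the product form used above.
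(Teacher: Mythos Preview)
Your proposal is correct and follows essentially the same approach as the paper: both arguments use the vanishing of $\int_0^1 \cos(2\pi n u)\,du$ and $\int_0^1 \sin(2\pi n u)\,du$ (together with the $\ell^1$ assumption) for the integral identities, and the amplitude bound $|a_n\cos(2\pi nu)+b_n\sin(2\pi nu)|\leq \sqrt{a_n^2+b_n^2}$ to obtain $|h(u,v)|\leq \sum_{n,m}\sqrt{a_n^2+b_n^2}\sqrt{c_m^2+d_m^2}\leq 1$. Your write-up is in fact slightly more careful than the paper's in justifying the interchange of sum and integral via the Weierstrass $M$-test and in noting that the double series factors as a product of nonnegative single series.
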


\begin{proof}
Notice that we can rewrite $h(u, v)$ as,
\begin{eqnarray}
h(u, v) &=& \ds\sum_{n, m} a_n c_m\cos{(2\pi nu)}\cos{(2\pi mv)} + \sum_{n, m} a_n d_m\cos{(2\pi nu)}\sin{(2\pi mv)}\nonumber\\
 & & + \sum_{n, m} b_n c_m\sin{(2\pi nu)}\cos{(2\pi mv)} + \sum_{n, m} b_n d_m\sin{(2\pi nu)}\sin{(2\pi mv)}.\nonumber
\end{eqnarray}
The first conclusion follows from the fact that the sequences $(a_n), (b_n), (c_n), (d_n) \in \ell^1$ and $\ds\int_0^1 \sin{(2\pi nx)}\;dx = \ds\int_0^1 \cos{(2\pi nx)}\;dx = 0$, $\forall n\in \mathbb{N}$.

Now to prove $h(u, v) \geq -1$ for all $(u, v) \in \mathbb{I}^2$, first notice that $\ds \sum_n \sqrt{a_n^2+b_n^2} \leq \sum_n \left(|a_n| + |b_n|\right) < \infty$. Also,
$$-\sqrt{a_n^2+b_n^2} \leq a_n\cos{(2\pi n u)} + b_n \sin{(2\pi n u)} \leq \sqrt{a_n^2+b_n^2},$$
for all $u\in \mathbb{I},\; n\in \mathbb{N}$. Hence we have
$$\ds -\sum_{n, m}\sqrt{a_n^2+b_n^2}\sqrt{c_m^2+d_m^2} \leq h(u, v) \leq \sum_{n, m}\sqrt{a_n^2+b_n^2}\sqrt{c_m^2+d_m^2}.$$
Therefore the additional hypothesis on $h$ guarantees that $h(u, v) \geq -1$.
\end{proof}

It is evident that $h(\cdot, \cdot)$ is continuous and therefore by Theorem \ref{MJK}, $C$, defined by
$$C(u, v) = \ds\int_0^v\int_0^u 1 + h(s, t)\; ds \;dt,$$
is a copula, where $h$ is of the form given in Lemma \ref{MJK_lem1}.

Noting that
$$ \sum_{n=1}^\infty a_n \cos (2\pi n u) + b_n \sin(2 \pi nu) = \sum_{n \in \mathbb{Z}\setminus \{0\}} \gamma_n e^{2\pi i nu}, $$
where $ a_n = \gamma_n + \gamma_{-n} $ and $ b_n = i (\gamma_n - \gamma_{-n}) $, or equivalently,
$$ \gamma_n = \overline{\gamma_{-n} } = \frac{a_n - i b_n}{2}, $$ it follows that $ |\gamma_n| = \frac{\sqrt{a_n^2 + b_n^2}}{2} $.  Similarly, if $ \delta_m = \overline{\delta_{-m}} = \frac{c_m - i d_m}{2} $,
$$ h(u,v) = \sum_{n,m \in \mathbb{Z}\setminus\{0\}} \gamma_n \delta_m \exp(2\pi i (nu + mv)), $$ with sequences $ (\gamma_n), (\delta_m) \in  \ell^1 $, and $ ||\gamma_n||||\delta_m|| \leq \frac{1}{4} $, then $ h $ will satisfy the conclusions of
Lemma \ref{MJK_lem1}, and will generate a copula by eq. (\ref{copula}). This restatement of Lemma \ref{MJK_lem1} clearly indicates that $ h $ is obtained from products of functions in the disc algebra with vanishing zero-th moment and where the product of the $ \ell^1 $ norms of the coefficients are $ \leq \frac{1}{4} $.  More precisely,

\begin{thm}\label{generator}
If $ h $ is a function on the $ 2 $-torus arising from the product of functions on the unit circle, each section of $ h $ has a vanishing zero-th moment, and the product of $ \ell^1 $ norms of Fourier coefficients of components of $ h $ is $ \leq \frac{1}{4} $, then $ h$ generates a $ C^2 $-copula.
\end{thm}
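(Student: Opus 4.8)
The plan is to treat Theorem \ref{generator} as a coordinate-free paraphrase of Lemma \ref{MJK_lem1} together with Theorem \ref{MJK}: once the three hypotheses are unpacked in terms of ordinary trigonometric Fourier series, the conclusion is essentially immediate. So the first step is notational. Identify the $2$-torus with $\mathbb{I}^2$ and write the given $h$ as a product of functions on the circle, $h(u,v)=\phi(u)\psi(v)$, where $\phi,\psi$ are the ``components''; expand them in Fourier series, passing freely between the exponential form $\phi(u)=\sum_{n}\gamma_n e^{2\pi i nu}$ and the real form $\phi(u)=\sum_{n\ge1}\bigl(a_n\cos 2\pi nu+b_n\sin 2\pi nu\bigr)$ via $\gamma_n=\overline{\gamma_{-n}}=(a_n-ib_n)/2$, so that $|\gamma_n|=\tfrac12\sqrt{a_n^2+b_n^2}$, and similarly for $\psi$ with $(c_m),(d_m)$ and $\delta_m$. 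Under this translation the statement ``the Fourier coefficients of the components lie in $\ell^1$'' is exactly the hypothesis $(a_n),(b_n),(c_m),(d_m)\in\ell^1$ of Lemma \ref{MJK_lem1}, and $h$ has exactly the bilinear form appearing there.

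The second step is to match the remaining two hypotheses and then invoke the earlier results. The condition that every section of $h$ has vanishing zero-th moment says $\int_0^1 h(u,v)\,dv=\phi(u)\,\delta_0=0$ for all $u$ and $\int_0^1 h(u,v)\,du=\gamma_0\,\psi(v)=0$ for all $v$; apart from the trivial case $h\equiv0$ (for which $C(u,v)=uv$, manifestly a $\mathcal{C}^2$ copula) this forces $\gamma_0=\delta_0=0$, which are precisely the moment identities (\ref{copula_bd1}) and (\ref{copula_bd2}). Next, $\ell^1$ coefficient sequences give uniformly, hence absolutely, convergent Fourier series, so $\phi$ and $\psi$ are continuous and $h$ is continuous on $\mathbb{I}^2$. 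Finally, the bound ``the product of the $\ell^1$ norms of the components' coefficients is $\le\tfrac14$'' is, after the elementary bookkeeping $\sqrt{a_n^2+b_n^2}=2|\gamma_n|$, the same as $\sum_{n,m}\sqrt{a_n^2+b_n^2}\sqrt{c_m^2+d_m^2}\le1$, the additional hypothesis of Lemma \ref{MJK_lem1}; that lemma then yields $h(u,v)\ge-1$ on $\mathbb{I}^2$, so $h\colon\mathbb{I}^2\to[-1,\infty)$ is continuous and satisfies (\ref{copula_bd1})–(\ref{copula_bd2}). Theorem \ref{MJK} now applies and shows that $C(u,v)=\int_0^v\int_0^u\bigl(1+h(s,t)\bigr)\,ds\,dt$ is a copula; since $\partial^2 C/\partial u\,\partial v=1+h$ is continuous and $C$ meets the four boundary conditions (as verified inside the proof of Theorem \ref{MJK}), Lemma \ref{copula_pde} identifies $C$ as a twice continuously differentiable copula, i.e.\ $h$ generates a $\mathcal{C}^2$-copula.

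The routine parts here are the two triangle-inequality estimates $|\phi(u)|\le\sum_n|\gamma_n|$, $|\psi(v)|\le\sum_m|\delta_m|$ and the moment computation. The one point that genuinely needs care is the constant in the norm hypothesis: one must fix a single convention for the $\ell^1$ norm of the components (summation over all of $\mathbb{Z}\setminus\{0\}$ versus over the positive frequencies only, i.e.\ over the holomorphic part in the disc algebra) and track the resulting factor $4=2\cdot2$ that arises because a real function on the circle is twice the real part of its holomorphic part; this is exactly what makes the threshold $\tfrac14$ in Theorem \ref{generator} correspond to the threshold $1$ in Lemma \ref{MJK_lem1}. A secondary caveat worth flagging in the write-up is that ``$\mathcal{C}^2$-copula'' is used here in the sense codified by Lemma \ref{copula_pde} (continuity of the mixed second partial together with the boundary conditions); upgrading $C$ to classical twice continuous differentiability would additionally require a hypothesis such as $\sum_n n|\gamma_n|<\infty$ and $\sum_m m|\delta_m|<\infty$, and is not needed for the stated result.
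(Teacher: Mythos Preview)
Your proposal is correct and follows exactly the approach of the paper: the paper does not give a separate formal proof of Theorem~\ref{generator}, but instead presents it as a restatement of Lemma~\ref{MJK_lem1} after the paragraph translating between the real trigonometric form and the complex exponential form (recording $|\gamma_n|=\tfrac12\sqrt{a_n^2+b_n^2}$ and the corresponding norm bound $\|\gamma\|_{\ell^1}\|\delta\|_{\ell^1}\le\tfrac14$), and then invokes Theorem~\ref{MJK}. Your write-up supplies precisely those details --- the bookkeeping on the factor~$4$, the vanishing zero-th moment giving (\ref{copula_bd1})--(\ref{copula_bd2}), and the appeal to Lemma~\ref{MJK_lem1} for $h\ge-1$ --- and your added caveats (the trivial case $h\equiv0$ and the meaning of ``$\mathcal{C}^2$-copula'' via Lemma~\ref{copula_pde}) are sound refinements rather than departures.
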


This theorem provides a large class of examples from which one may construct copulas with optimal properties ($\rho $ and $ \tau $).

\begin{ex}\label{four_cop_ex}
Let $b_1 = c_1 = 1, b_n = c_n = 0,\; \forall n\neq 1; a_n= d_n = 0\; \forall n$, then
$$h(u,v) = \sin{(2\pi u)}\cos{(2\pi v)}.$$
Define $C$ as follows
\begin{eqnarray}
C(u, v) &=& \ds\int_0^v\int_0^u 1 + \sin{(2\pi s)}\cos{(2\pi t)}\; ds \;dt\nonumber\\
&=& \ds\int_0^v u + \frac{1}{2\pi}\{1-\cos{(2\pi u)}\}\cos{(2\pi t)}\; ds \;dt\nonumber\\
&=& \ds uv + \frac{1}{4\pi^2}\{1-\cos{(2\pi u)}\}\sin{(2\pi v)}\nonumber
\end{eqnarray}
Then it is easy to verify that $C$ forms a copula and more importantly it is not symmetric. A contour plot of $C$ is given in Figure \ref{fig:fourier-cop}.
\begin{figure}[ht]
\centering
\includegraphics[width=0.6\textwidth]{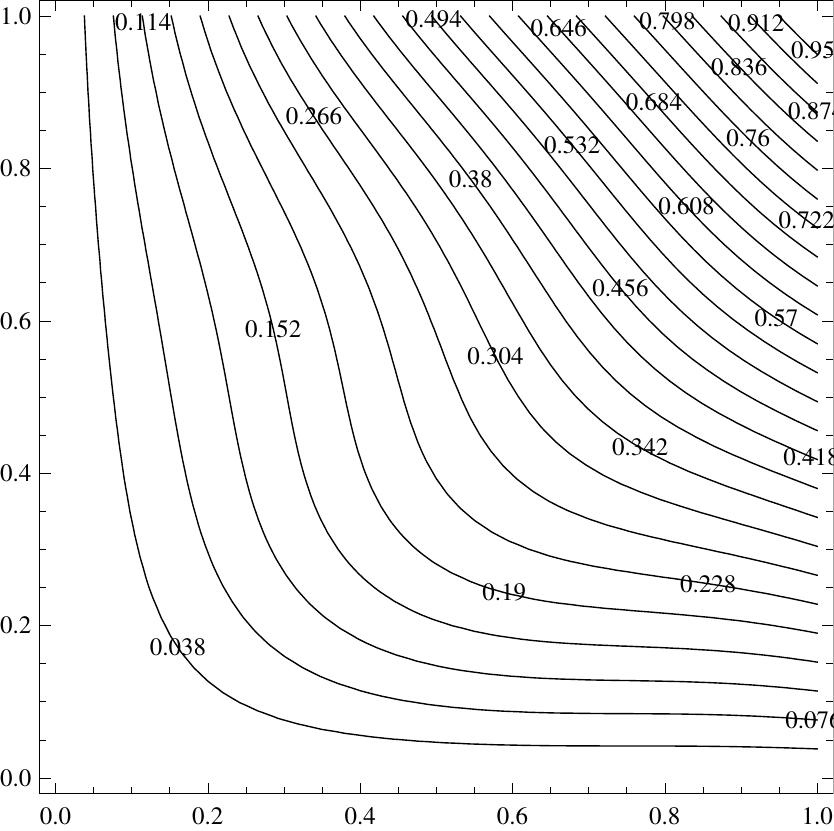}
\caption{Contour plot  of a Fourier copula in Ex. \ref{four_cop_ex}}
\label{fig:fourier-cop}
\end{figure}

\end{ex}

\section{Spearman rho and Kendall tau}

The two most commonly used nonparametric measures of association for two random variables are Spearman rho ($\rho$) and Kendall tau ($\tau$). In general they measure different aspects of the dependence structures and hence for many joint distributions these two measures have different values.
\begin{defn}
Suppose X and Y are two random variables with marginal distribution functions F and G, respectively. Then \textit{Spearman} $\rho$ is the ordinary (Pearson) correlation coefficient of the transformed random variables F(X) and G(Y), while \textit{Kendall} $\tau$ is the difference between the probability of concordance $Pr[(X1 - X2)(Y1 - Y2)>0]$ and the probability of discordance $Pr[(X1 - X2)(Y1 - Y2)<0]$ for two independent pairs (X1, Y1) and (X2, Y2) of observations drawn from the distribution.
\end{defn}

In terms of dependence properties, Spearman $\rho$ is a measure of average quadrant dependence, while Kendall $\tau$ is a measure of average likelihood ratio dependence (see \cite{Ne99} for details). If $X$ and $Y$ are two continuous random variables with copula $C$, then Kendall $\tau$ and Spearman $\rho$ of $X$ and $Y$ are given by,
\begin{equation}\label{tau}
\tau = \ds 4\int\int_{\mathbb{I}^2} C(u, v) \;dC(u, v) - 1
\end{equation}
\begin{equation}\label{rho}
\rho = \ds 12\int\int_{\mathbb{I}^2} C(u, v) \;du\; dv - 3
\end{equation}

\subsection{$\rho$, $\tau$ for Fourier copulas}
From Lemma (\ref{MJK_lem1}), if we define
$$h(u, v) = \ds\sum_n \left(a_n\cos{(2\pi nu)} + b_n\sin{(2\pi nu)}\right) \sum_m\left(c_m\cos{(2\pi mv)} + d_m\sin{(2\pi mv)}\right),$$
where $\ds\sum_{n, m} \sqrt{a_n^2 + b_n^2} \sqrt{c_m^2+d_m^2} \leq 1$, then the Fourier copulas are given by
\begin{eqnarray}
C(u, v) &=& \int_0^v\int_0^u 1 + h(s, t)\; ds \;dt \nonumber\\
&=& uv + \sum_{n, m} \frac{1}{4\pi^2nm}\left[a_n\sin{(2\pi nu)} + b_n\{1-\cos{(2\pi nu)}\}\right] \nonumber\\
& & \qquad \left[c_m\sin{(2\pi mu)} + d_m\{1-\cos{(2\pi mv)}\}\right]\nonumber
\end{eqnarray}
Using (\ref{tau}) and (\ref{rho}), we have
$$\rho = \frac{3}{\pi^2} \sum_{n, m}\frac{b_nd_m}{nm},$$
$$\tau = \frac{2}{\pi^2} \sum_{n, m}\frac{b_nd_m}{nm}.$$
Now since $\ds -\sum_{n, m} \sqrt{a_n^2 + b_n^2} \sqrt{c_m^2+d_m^2} \leq \sum_{n, m}\frac{a_nc_m}{nm} \leq \sum_{n, m} \sqrt{a_n^2 + b_n^2} \sqrt{c_m^2+d_m^2}$, we can conclude that

$$-0.304 \approx -\frac{3}{\pi^2} \leq \rho \leq \frac{3}{\pi^2} \approx 0.304,$$
$$-0.203 \approx -\frac{2}{\pi^2} \leq \tau \leq \frac{2}{\pi^2} \approx 0.203.$$

\begin{remark}
\emph{Fourier copulas can be generalized by writing $h$ as follows,
$$h(s, t) = \sum_{n ,m \in \mathbb{Z}\setminus \{0\}} \alpha_{n, m}\; \exp(2\pi i (ns + mt)),$$
where $\ds \alpha_{n, m} = \overline{\alpha_{-n, -m}} \; \forall n, m \in \mathbb{Z}\setminus \{0\}$ and $\ds \sum_{n, m \in \mathbb{N}} |\alpha_{n, m}| + |\alpha_{-n, m}|$$ \leq \frac{1}{2}$. The latter condition here is to ensure that $ h $ will have range in $ [-1, \infty) $. Notice that for all $n, m \neq 0$, $\alpha_{-n,m}$ is equal to $\overline{\alpha_{n,-m}}$ and hence no additional conditions are necessary to assure $h$ to be real-valued.\\ This yields $\ds \rho = -\frac{3}{\pi^2} \sum_{n, m\in\mathbb{Z}\setminus \{0\}}\frac{\alpha_{n, m}}{nm},$ and $\ds \tau = -\frac{2}{\pi^2} \sum_{n, m\in\mathbb{Z}\setminus \{0\}}\frac{\alpha_{n, m}}{nm},$ and it can be shown again that $ -\frac{3}{\pi^2} \leq \rho \leq \frac{3}{\pi^2}$, and $ -\frac{2}{\pi^2} \leq \tau \leq \frac{2}{\pi^2}.$}
\end{remark}

\subsection{Optimizing $\rho$ for $\mathcal{C}^2$-copulas}
In this section we will optimize Spearman's rho using an approximation method for $\mathcal{C}^2$-copulas with $h$ of the form, $h(x, y) = \varphi(x)\psi(y)$, where $\varphi$ and $\psi$ both are continuous real-valued functions on $\mathbb{I}$. Notice that in this special case, $\rho$ can be simplified into the following form,
\begin{eqnarray}
\rho &=& 12\int\int_{\mathbb{I}^2} \left[\int_{t=0}^v \int_{s=0}^u 1 + \varphi(s) \psi(t) ds\;dt\right] du\;dv - 3\nonumber\\
&=& 12\int_0^1 \int_0^u \varphi(s) ds\;du\int_0^1 \int_0^v \psi(t) dt\;dv.\nonumber
\end{eqnarray}
This suggests that optimizing $\rho$ is equivalent to optimizing both $\int_0^1 \int_0^u \varphi(s) ds\;du$ and $\int_0^1 \int_0^v \psi(t) dt\;dv$.

Define $G(u):=\int_0^u \varphi(s) ds$ and $H(v):=\int_0^v \psi(t) dt$. Then for some positive $\alpha_1, \alpha_2, \beta_1, \beta_2$, the optimization problems become,

\noindent\begin{minipage}{.5\linewidth}
\begin{equation*}
\begin{aligned}
& {\text{max/min}}
& & I_1 = \int_0^1 G(u)\; du \\
& \text{subject to}
& & G(0) = G(1) = 0 \\
&&& -\alpha_1\leq G'(u) \leq \beta_1,
\end{aligned}
\end{equation*}
\end{minipage}%
\begin{minipage}{.5\linewidth}
\begin{equation*}
\begin{aligned}
& {\text{max/min}}
& & I_2 = \int_0^1 H(v)\; du \\
& \text{subject to}
& & H(0) = H(1) = 0 \\
&&& -\alpha_2\leq H'(v) \leq \beta_2.
\end{aligned}
\end{equation*}
\end{minipage}\\

Although it apparently looks like these optimization problems can be solved independently,  they are related by the fact that $G'(u) H'(v) = \varphi(u) \psi(v) \geq -1$. This implies $\min\{-\alpha_1\beta_2, -\alpha_2\beta_1\} \geq -1$. For the optimal possibility, we choose, $\beta_2=-(\alpha_1)^{-1}$ and $\alpha_2=-(\beta_1)^{-1}$. This is evident from the fact that both $I_1$ and $I_2$ can be positive or negative, $\rho_{\text{max}}$ will occur either if both $I_1$ and $I_2$ are maximum or if both are minimum and $\rho_{\text{min}}$ will occur if one of $I_1$ and $I_2$ is maximum and the other is minimum.

Geometrically, $I_1$ will be maximum if $G$ has the form as in Figure \ref{fig:Gmax} and will be minimum if $G$ has the form as in Figure \ref{fig:Gmin}.

\begin{figure}[htbp]
\begin{minipage}{0.5\linewidth}
\centering
\includegraphics[width=0.8\textwidth]{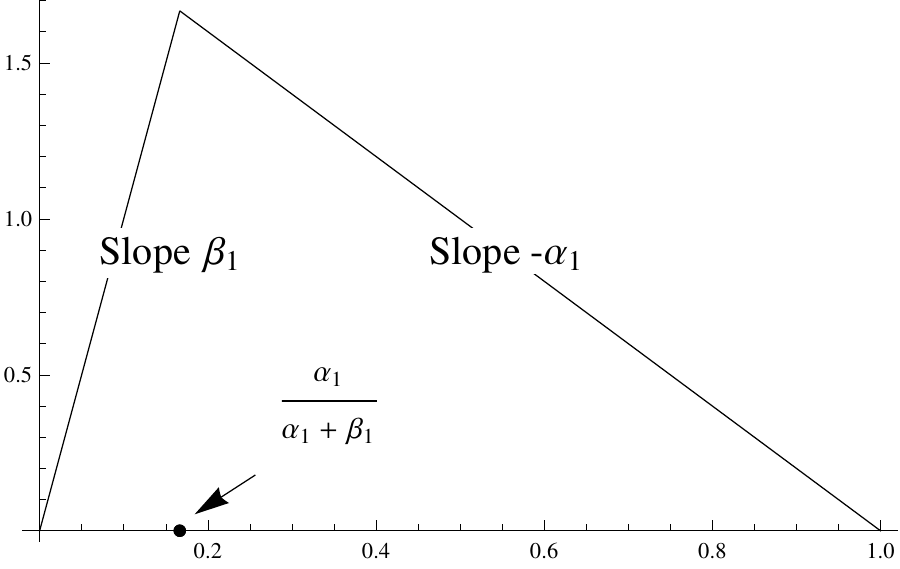}
\caption{$G$, Maximizing $I_1$}
\label{fig:Gmax}
\end{minipage}%
\begin{minipage}{0.5\linewidth}
\centering
\includegraphics[width=0.8\textwidth]{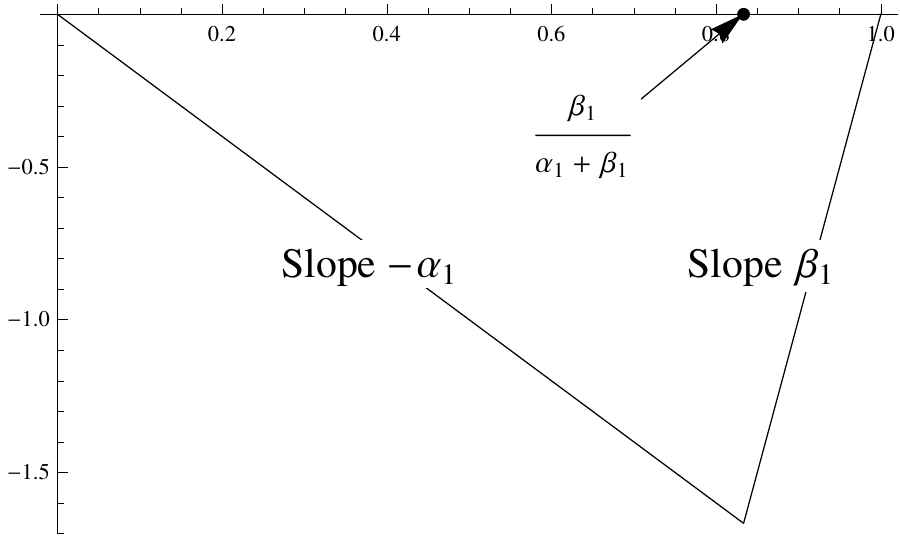}
\caption{$G$, Minimizing $I_2$}
\label{fig:Gmin}
\end{minipage}
\end{figure}

One can easily prove that, in order to optimize $I_1$, $\beta_1$ must be equal to $\alpha_1$. For convenience, now onwards we will write $\alpha$ for $\alpha_1$. This suggests that if $G(x) = GM(x) = -\alpha |x-0.5|+0.5\alpha$, or $G(x) = Gm(x) = \alpha |x-0.5|-0.5\alpha$ then $I_1$ will be maximum or minimum, respectively. But in either case, $G$ is not differentiable at $x=0.5$, and hence $\varphi$ is not continuous. To avoid this, we will approximate $G$ by a smooth function as follows: for arbitrarily small $\varepsilon >0$, define $$\widetilde{GM}(x) = -\widetilde{Gm}(x) = \frac{\alpha}{2}\left(\sqrt{1+4\varepsilon^2}-\sqrt{(1-2x)^2+4\varepsilon^2}\right).$$ It is worth noting that $\ds\sup_{x\in \mathbb{I}} \Big\{|\widetilde{GM}(x)-GM(x)|, |\widetilde{Gm}(x)-Gm(x)|\Big\} \rightarrow 0$ as $\varepsilon \rightarrow 0$ and $-\alpha \leq \widetilde{GM}'(x), \widetilde{Gm}'(x) \leq \alpha$.

\begin{figure}[htbp]
\begin{minipage}{0.5\linewidth}
\centering
\includegraphics[width=0.8\textwidth]{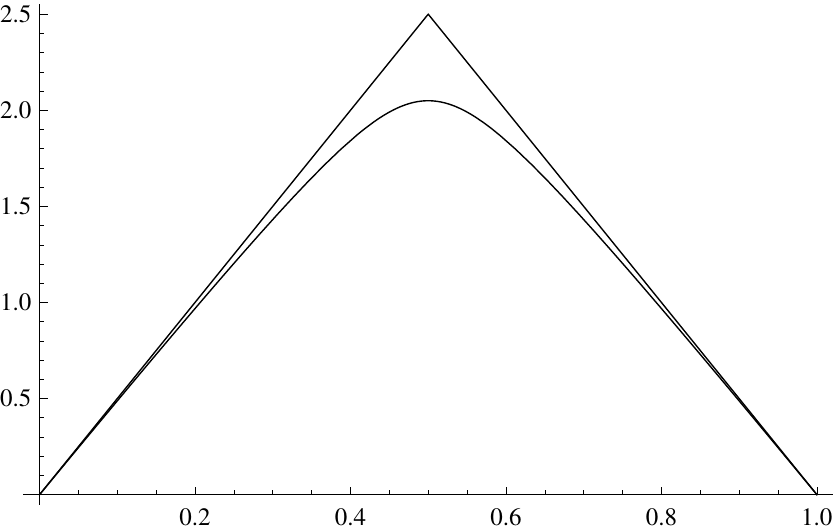}
\caption{$GM$, $\widetilde{GM}$ for $\alpha=5$, $\varepsilon = 0.1$}
\label{fig:Gapprox1}
\end{minipage}%
\begin{minipage}{0.5\linewidth}
\centering
\includegraphics[width=0.8\textwidth]{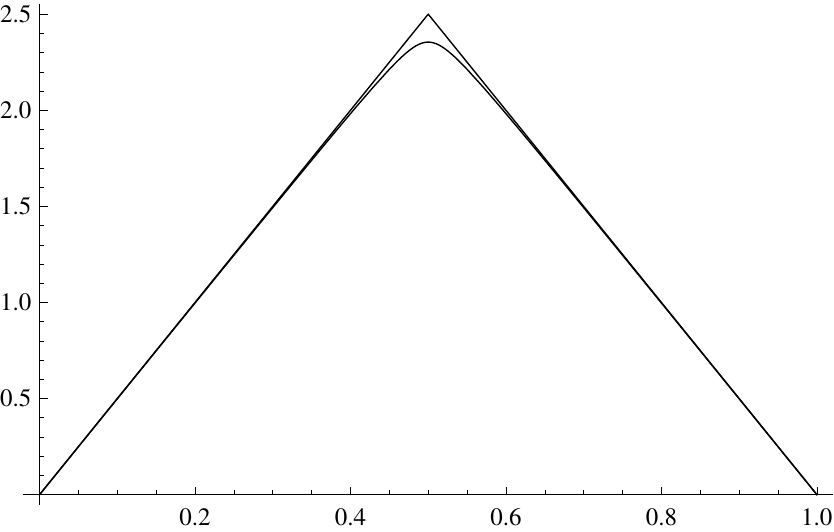}
\caption{$GM$, $\widetilde{GM}$ for $\alpha=5$, $\varepsilon = 0.03$}
\label{fig:Gapprox2}
\end{minipage}
\end{figure}

We can similarly optimize $I_2$ by approximating maximum and minimum of $H$ by the following functions $$\widetilde{HM}(x) = -\widetilde{Hm}(x) = \frac{1}{2\alpha}\left(\sqrt{1+4\varepsilon^2}-\sqrt{(1-2x)^2+4\varepsilon^2}\right).$$

Hence optimum values of $\rho$ will be obtained by approximating $h$ by the following functions,
$$h^\varepsilon_{\text{max}}(x, y) = \widetilde{GM}'(x)\widetilde{HM}'(y) = \widetilde{GM}'(x)\widetilde{HM}'(y) = \frac{(1-2x)(1-2y)}{\sqrt{(1-2x)^2+4\varepsilon^2}\sqrt{(1-2y)^2+4\varepsilon^2}},$$
$$h^\varepsilon_{\text{min}}(x, y) = \widetilde{GM}'(x)\widetilde{Hm}'(y) = \widetilde{Gm}'(x)\widetilde{HM}'(y) = -\frac{(1-2x)(1-2y)}{\sqrt{(1-2x)^2+4\varepsilon^2}\sqrt{(1-2y)^2+4\varepsilon^2}}.$$

Notice that each of $h^\varepsilon_{\text{max}}$ and $h^\varepsilon_{\text{min}}$ will generate a copula as it satisfies all the hypothesis of Theorem \ref{MJK}. Then the corresponding Spearman's rho and Kendall's tau are given by,
$$\rho^\varepsilon_{\text{max}} = \frac{3}{4} \left(\sqrt{1 + 4 \varepsilon^2} - 4 \varepsilon^2 \coth^{-1}(\sqrt{1 + 4 \varepsilon^2})\right)^2$$
$$\rho^\varepsilon_{\text{min}} = -\frac{3}{4} \left(\sqrt{1 + 4 \varepsilon^2} - 4 \varepsilon^2 \coth^{-1}(\sqrt{1 + 4 \varepsilon^2})\right)^2$$

$$\tau^\varepsilon_{\text{max}} = \frac{1}{2} \left[1 + 4 \varepsilon^2 + 4 \varepsilon^2 \left(\sqrt{1 + 4 \varepsilon^2} - 2 \varepsilon^2 \coth^{-1}(\sqrt{1 + 4 \varepsilon^2})\right) \ln\left(\frac{
     1 + 2 \varepsilon^2 - \sqrt{1 + 4 \varepsilon^2}}{2 \varepsilon^2}\right)\right]$$
$$\tau^\varepsilon_{\text{max}} = -\frac{1}{2} \left[1 + 4 \varepsilon^2 + 4 \varepsilon^2 \left(\sqrt{1 + 4 \varepsilon^2} - 2 \varepsilon^2 \coth^{-1}(\sqrt{1 + 4 \varepsilon^2})\right) \ln\left(\frac{
     1 + 2 \varepsilon^2 - \sqrt{1 + 4 \varepsilon^2}}{2 \varepsilon^2}\right)\right]$$

The optimal values of $\rho$ and corresponding $\tau$ will be obtained by letting $\varepsilon \rightarrow 0$. Table \ref{table:optization} shows how the values of $\rho$ approach the optimal values as $\varepsilon \rightarrow 0$ and it is clear that $-0.75 \leq \rho \leq 0.75$ and $-0.5 \leq \tau \leq 0.5$.

\begin{table}[t]
\centering
\begin{tabular}{|c|c|c|c|c|}
\hline
$\varepsilon $ & $\rho^\varepsilon_{\text{max}}$  &$\rho^\varepsilon_{\text{min}}$ & $\tau^\varepsilon_{\text{max}}$  &$\tau^\varepsilon_{\text{min}}$\\
\hline
\hline
1   & 0.0726437  & -0.0726437 &  0.0484292 & -0.0484292\\
\hline
0.1   & 0.644923  & -0.644923 & 0.429949  & -0.429949\\
\hline
0.01   & 0.747539  & -0.747539 & 0.498359  & -0.498359\\
\hline
0.001   & 0.749962  & -0.749962 &  0.499974 & -0.499974\\
\hline
0.0001   & 0.749999  & -0.749999 & 0.5  & -0.5\\
\hline
\end{tabular}
\caption{$\rho$ and $\tau$ values as $\varepsilon$ changes.}
\label{table:optization}
\end{table}

\section{Conclusion}
We proposed a new generalized copula family which include not only asymmetric copulas but also all copula families available in the current literature. Especially, the family of Fourier copulas we proposed is very useful copula family for analyzing asymmetric data such as financial return data or cancer data in Bioinformatics.  In our future study, we will extend our copula method to a multivariate case and then incorporate time varying component to our proposed method.

\bibliographystyle{amsalpha}

\end{document}